\documentclass[letterpaper, 10 pt,journal]{ieeetran}
\IEEEpubidadjcol 
\IEEEoverridecommandlockouts  

\usepackage[utf8]{inputenc}

\usepackage{amsmath}
\usepackage{amssymb, amsthm}
\usepackage[T1]{fontenc}
\usepackage{graphicx, xcolor, algpseudocode, algorithm, comment, graphicx, epsfig}

\DeclareMathOperator*{\argmin}{arg\,min}
\DeclareMathSymbol{\shortminus}{\mathbin}{AMSa}{"39}

\def\diag{\text{diag}}
\def\bquad{\!\!\!\!}
\def\LMS{least-mean square }

\usepackage{cite}

\usepackage[hidelinks]{hyperref}
\newtheorem{theorem}{Theorem}
\newtheorem{lemma}[theorem]{Lemma}
\newtheorem{remark}{Remark}
\newtheorem{corollary}[theorem]{Corollary}
\newtheorem{proposition}[theorem]{Proposition}
\newtheorem{assumption}{Assumption}

\begin{document}
\IEEEoverridecommandlockouts

\title{Adaptive Economic Model Predictive Control for linear systems with performance guarantees}
\author{Maximilian Degner$^{1,2}$, Raffaele Soloperto$^1$, Melanie N. Zeilinger$^2$, John Lygeros$^1$, Johannes K\"ohler$^2$
\thanks{$^1$Automatic Control Laboratory, ETH Zürich}
\thanks{$^2$Institute for Dynamic Systems and Control, ETH Zürich}
\thanks{Johannes K\"ohler was supported by the Swiss National Science Foundation under NCCR Automation (grant agreement 51NF40 180545).}
\thanks{Raffaele Soloperto is supported by
the European Research Council under the H2020 Advanced Grant no. 787845 (OCAL).}
}
\maketitle
\begin{abstract}
We present a model predictive control (MPC) formulation to directly optimize economic criteria for linear constrained systems subject to disturbances and uncertain model parameters. 
The proposed formulation combines a certainty equivalent economic MPC with a simple least-squares parameter adaptation. 
For the resulting adaptive economic MPC scheme, we derive strong asymptotic and transient performance guarantees.
We provide a numerical example involving building temperature control and demonstrate performance benefits of online parameter adaptation. 
\end{abstract}
\IEEEpubid{\begin{minipage}{\textwidth}\ \\[24pt] \\ \\
         \copyright 2024 IEEE.  Personal use of this material is  permitted.  Permission from IEEE must be obtained for all other uses, in  any current or future media, including reprinting/republishing this material for advertising or promotional purposes, creating new  collective works, for resale or redistribution to servers or lists, or  reuse of any copyrighted component of this work in other works.
     \end{minipage}}
\section{Introduction}
Model predictive control (MPC) is an optimization-based control strategy that is applicable to general MIMO systems, explicitly accounts for constraints, and optimizes general performance criteria~\cite{rawlingsModelPredictiveControl2020}. 
However, deployment strategies of MPC typically rely on a sequential design and hierarchical architecture that might limit performance: 
The prediction model is obtained from physical first principles or system identification tools. During online operation, a real-time optimization layer computes optimal setpoints that are tracked by an MPC scheme~\cite{rawlingsFundamentalsEconomicModel2012}.
Changing environments or wear can deteriorate the accuracy of the offline obtained model, and thus jeopardize the performance of the MPC.
Furthermore, simply tracking a fixed setpoint limits the overall efficiency in case of persistent disturbances or frequently changing operating conditions.
This paper addresses both issues by directly optimizing economic performance and using online model adaptation.  

\subsection*{Related work} 
\emph{Economic} MPC formulations~\cite{rawlingsFundamentalsEconomicModel2012,faulwasser2018economic,kohler2020periodic,faulwasser2019toward} directly minimize the economic objective over the prediction horizon to improve performance. 
Exemplary applications for this methodology are minimizing energy consumption in heating, ventilation, and air conditioning (HVAC)~\cite{taheriModelPredictiveControl2022}, increasing production in chemical plants~\cite{rawlingsFundamentalsEconomicModel2012} or flexible manufacturing system~\cite{risbeck2019unification}. 
One of the key theoretical results in economic MPC is that, on average, the closed-loop performance is no worse than the performance at the optimal steady-state~\cite{angeliAveragePerformanceStability2012a,amritEconomicOptimizationUsing2011}, i.e., at worst we obtain an asymptotic performance comparable to a tracking MPC. 
Stronger transient performance bounds relative to the infinite-horizon optimal solution are obtained in~\cite{grune2014asymptotic,grune2015non}, see also~\cite{bayer2018optimal,schwenkel2020robust,kloppelt2021transient} for robust performance bounds under bounded disturbances. 

\emph{Adaptive} MPC uses online measurements to adapt the model parameters, thus relying less on the initial parameter identification and ensuring reliable operation under online changes in system behavior. 
While the benefits of adaptive MPC methods have been demonstrated empirically~\cite{yoon1994adaptive}, the development of adaptive MPC theory has received less attention~\cite{mayne2014model}.
\emph{Robust adaptive} MPC schemes~\cite{lorenzenRobustMPCRecursive2019a,kohlerLinearRobustAdaptive2020,lu2021robust} combine a robust MPC design with set-membership  updates to reduce conservatism in constraint handling, see also~\cite{gonccalves2016robust,kohlerRobustAdaptiveModel2021a,sasfi2023robust} for nonlinear systems.
In~\cite{lorenzenRobustMPCRecursive2019a}, finite-gain $\mathcal{L}_2$ stability is shown by updating parameters in the cost with a \LMS  filter. 
In~\cite{kohlerRobustAdaptiveModel2021a}, it is shown that the same stability result applies to linearly parameterized nonlinear systems under suitable regularity conditions. 
Convergence results for more general non-parametric model updates are shown in~\cite{solopertoGuaranteedClosedLoopLearning2023,zuliani2023convergence}.

\emph{Adaptive economic MPC:}  
The integration of online measurements to  ensure optimality of the steady-state is studied in the framework of modifier adaptation~\cite{marchetti2009modifier}. 
MPC approaches that integrate adaptation to improve the steady-state optimization are provided in~\cite{shaltoutAdaptiveEconomicModel2018,faulwasser2019toward,vaccari2021offset,oliveira-silvaEconomicMPCModifier2023}.  
A nonlinear robust adaptive economic MPC scheme is proposed in~\cite[Chap.~3]{wu2023learning}, however, no performance guarantees are derived. 
Overall, theoretical performance guarantees for economic MPC with online model adaptation require further study. 

\subsection*{Contribution}
In this paper, 
we consider linear constrained, open-loop stable systems subject to disturbances and uncertain model parameters with the goal to minimize a convex linear-quadratic economic cost. 
Our main assumptions are open-loop stability of the linear system and relaxation of state constraints with soft penalties. 
For this problem, we propose an adaptive economic MPC (AE-MPC) scheme which:
\begin{itemize}
    \item uses a certainty-equivalent economic MPC scheme with a linear-quadratic terminal cost in combination with the LMS parameter adaptation from~\cite{lorenzenRobustMPCRecursive2019a};
    \item for finite-energy disturbances,  recovers the asymptotic performance guarantees from economic MPC with perfect model knowledge~\cite{angeliAveragePerformanceStability2012a}, despite the possibly large initial error in the model parameters (Theorem~\ref{thm_asymp-perf});
    \item for the case of bounded disturbances, ensures a more general transient performance bound (Theorem~\ref{thm_transient-perf}) that depends linearly on the magnitude of the disturbances.
\end{itemize}

We demonstrate the performance benefits of the proposed AE-MPC scheme with a numerical example involving building temperature control in comparison to an economic MPC without parameter adaptation.

\subsection*{Outline and Notation}
We first define the problem setup (Sec.~\ref{sec_problem-setup}) and present the proposed AE-MPC scheme (Sec.~\ref{sec_method}). 
Then we provide a theoretical performance analysis and discussion (Sec.~\ref{sec_theoretical-results}), 
and demonstrate these results with a simulation of a simple building temperature control problem (Sec.~\ref{sec_numerical-example}). 

We denote the $1$-norm and $2$-norm of a vector $x\in\mathbb{R}^n$ by $\|x\|_1$ and $\|x\|$, respectively. 
The weighted norm with $Q\in\mathbb{R}^{n\times n}$ is denoted by $\|x\|_Q^2=x^\top Qx$.  
The spectral norm of a matrix $A\in\mathbb{R}^{n\times m}$ is given by $\|A\|=\sqrt{\lambda_{\text{max}}( A^\top A)}$, where $\lambda_{\max}$ denoted the maximal eigenvalue.
A positive semidefinite matrix $R$ is denoted by $R\succeq 0$ and a positive definite matrix $Q$ is denoted by $Q\succ0$.
The identity matrix of size $n\times n$ is signified by $I_n$.
At time $k$, we denote a prediction of quantity $x$ for time $k+j$ as $x_{j|k}$. 
The $i$-th element of a vector $x\in\mathbb{R}^n$ is represented by $[x]_i$.
A function $\alpha$ is of class $\mathcal{K}_\infty$, if $\alpha: \mathbb{R}_{\geq0}\rightarrow\mathbb{R}_{\geq0}, \alpha(0)=0, \lim_{s\rightarrow \infty}\alpha(s)=\infty$, and $\alpha$ strictly increasing and continuous. 
The non-negative integers are represented by $\mathbb{N}$ and $\mathbb{N}_{[a,b]} = {\{n\in\mathbb{N}| a\leq n\leq b\}}$.

\section{Problem setup} \label{sec_problem-setup}
We consider discrete-time linear systems that are subject to additive disturbances and parametric uncertainty. The dynamics are given by
\begin{align} \label{DT-system-equation}
    x_{k+1} = A(\theta^\ast) x_k + B(\theta^\ast)u_k + w_k,
\end{align}
where $x_k\in \mathbb{R}^n,~ u_k\in\mathbb{R}^m, ~ w_k\in\mathbb{R}^n$ are the states, control inputs, and disturbances at time $k\in \mathbb{N}$, respectively. The parameter $\theta\in\mathbb{R}^d$ enters~\eqref{DT-system-equation} in an affine way, i.e., 
    \begin{equation*}
        A(\theta) = A_0 + \sum_{i=1}^d A_i\cdot [\theta]_i,\ B(\theta) = B_0 + \sum_{i=1}^d B_i\cdot [\theta]_i,
    \end{equation*}
where $A_i\in\mathbb{R}^{n\times n}, B_i\in\mathbb{R}^{n\times m},\ i\in\mathbb{N}_{[0,d]}$. The true parameter $\theta^\ast$ is unknown but constant over time and lies in a known compact, polytopic set $\Theta\subset \mathbb{R}^d$.
We assume that we have an initial parameter estimate $\hat{\theta}_0\in\Theta.$

We study the case where we want to satisfy the following state and input constraints
 \begin{align} \label{problem_setup_compact-constraints}
     (x_k, u_k)\in \mathbb{X} \times \mathbb{U}, \quad \forall k\in\mathbb{N},
 \end{align}
where $\mathbb{X}=\{x\in\mathbb{R}^n|Hx\leq h\}$, $h\in \mathbb{R}^c_{\geq 0}$, $\mathbb{U}$ is a compact polytope, and $(0,0)\in \mathbb{X}\times\mathbb{U}$.
The goal is to minimize the following convex economic cost
\begin{align}
    \ell_{\text{eco}}(x,u) &= \|x\|^2_Q + \|u\|^2_R + q^\top x + r^\top u, \label{problem_setup_eco-cost}
\end{align}
with $Q,~R \succeq 0$, which is in general not positive definite with respect to the origin.
To simplify the design, we soften the state constraints with a slack variable $s = \max\{Hx-h,0\}\in\mathbb{R}^c_{\geq 0}$, which is penalized by using the stage cost
\begin{align}
\ell(x,u,s)&=\ell_{\text{eco}}(x,u)+ \|s\|_\Lambda^2,
\label{problem_setup_eco-cost_2}
\end{align}
with a slack penalty $\Lambda\succ 0$. For simplicity of exposition, we consider a diagonal matrix $\Lambda$. 
\begin{assumption}[Stable system] \label{assump_system-matrices}
   The system is open-loop exponentially stable for all $\theta\in\Theta$ with a common Lyapunov function, i.e., $\exists P\succ 0$: 
   \begin{equation*}
        A(\theta)^\top P A(\theta)+I\preceq P \ \forall \theta\in\Theta.
    \end{equation*}
\end{assumption}
\begin{remark}[Open-loop stability and soft-constraints]
By restricting the problem setup to open-loop stable systems (Assumption~\ref{assump_system-matrices}) and soft-state constraints, the proposed method can be applied even if the initial parameter error and the disturbances are large. 
Notably, if sufficiently small bounds on parametric error and disturbances are known, both simplifications can be relaxed, see the discussion in Section~\ref{sec_discuss}.
\end{remark}

We assume that some possibly conservative bound on the disturbances is known, i.e., $w_k\in\mathbb{W}$, $\forall k\in\mathbb{N}$ with $\mathbb{W}$ compact. 
In combination with $\mathbb{U}$ compact, $A(\theta^*)$ Schur stable (Assumption~\ref{assump_system-matrices}) and some finite $x_0$, this implies that the closed-loop states $x_k$ are uniformly bounded. Hence, we assume that some conservative upper-bound on the state is available, i.e., $x_k\in\mathbb{Z}$, $\forall k\in\mathbb{N}$ with a known compact set $\mathbb{Z}$.

\section{Proposed Approach} \label{sec_method}
In this section, we describe the offline design and the online algorithm of the proposed adaptive economic MPC (AE-MPC) formulation. First, we provide the parameter adaptation scheme and its properties (Sec.~\ref{ssec_parameter-adapt}), then we construct a terminal cost (Sec.~\ref{ssec-terminal-cost}), and finally we state the resulting AE-MPC scheme (Sec.~\ref{ssec_AEMPC-scheme}).

\subsection{Parameter adaptation} \label{ssec_parameter-adapt}
Based on the initial estimate $\hat{\theta}_0\in\Theta$ of the true parameter $\theta^\ast$, we use the projected \LMS (LMS) filter from~\cite{lorenzenRobustMPCRecursive2019a} to adapt the parameter estimate $\hat{\theta}_k$ at every time instant $k\in\mathbb{N}$. 
The one-step ahead prediction at time $k$, is given by $x_{1|k}=A_0 x_k + B_0 u_k + D_k\hat{\theta}_k$, where $D_k = D(x_k,u_k) \in \mathbb{R}^{n \times d}$ and
\begin{align*}
    D(x_k,u_k) =\! [A_1x_k+B_1u_k \dots A_d x_k+B_d u_k].
\end{align*}

The update equations of the LMS are 
\begin{subequations}\label{LMS_update-eqs}\begin{align} 
    \begin{split}
        \tilde{\theta}_{k+1} &= \hat{\theta}_k + \mu D_k^\top (x_{k+1}-x_{1|k})\\
    &= \hat{\theta}_k + \mu D_k^\top (D_k\cdot(\theta^\ast - \hat{\theta}_k) + w_k),\label{LMS_update_a}
    \end{split}\\
    \hat{\theta}_{k+1} &= \argmin_{\theta \in \Theta} \|\theta-\tilde{\theta}_{k+1}\|, \label{LMS_update_b}
\end{align} \end{subequations}
where $\mu>0$ is the update gain which is chosen such that 
\begin{align}\label{LMS_gain-definition}
    \frac{1}{\mu} \geq \sup_{(x,u)\in \mathbb{Z}\times \mathbb{U}} \|D(x,u)\|^2.
\end{align} The following proposition is an adaption of~\cite[Lemma~5]{lorenzenRobustMPCRecursive2019a} and shows that the cumulated prediction error is bounded.
\begin{proposition}[LMS bounds]\label{prop_LMS-bound}
    Suppose that $x_k\in \mathbb{Z}$ and $u_k\in\mathbb{U}$ for all $k\in\mathbb{N}$. Then, for all $T\in\mathbb{N}$, it holds that
    \begin{align}\label{prop_LMS-bound_equation}
        \sum_{k=0}^T \|\tilde{x}_{1|k}\|^2\leq \frac{1}{\mu}\|\hat{\theta}_0-\theta^\ast\|^2 + \sum_{k=0}^T \|w_k\|^2,
    \end{align}
    with the one-step parametric prediction error
    \begin{align*}
        \tilde{x}_{1|k} := D_k\cdot(\theta^\ast-\hat{\theta}_k).
    \end{align*}
    Moreover, the difference between two successive parameter estimates satisfies
    \begin{align} \label{lemma_theta_diff_equation}
    \|D(x,u)\|\cdot \|\hat{\theta}_{k+1}-\hat{\theta}_k\| \leq \|\tilde{x}_{1|k}+w_k\|,~\forall (x,u) \in \mathbb{Z}\times \mathbb{U}.
    \end{align}
\end{proposition}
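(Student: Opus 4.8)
The plan is to treat the parameter error $e_k := \hat\theta_k - \theta^\ast$ as a Lyapunov-like quantity and show it obeys a one-step dissipation inequality whose supply rate is exactly $\mu(\|w_k\|^2 - \|\tilde x_{1|k}\|^2)$; summing then yields \eqref{prop_LMS-bound_equation}. Two elementary identities drive the argument: by definition the parametric prediction error satisfies $\tilde x_{1|k} = -D_k e_k$, and the measured innovation equals $x_{k+1}-x_{1|k} = D_k(\theta^\ast-\hat\theta_k)+w_k = \tilde x_{1|k}+w_k =: v_k$, which is precisely the vector appearing in the unprojected update \eqref{LMS_update_a}, i.e. $\tilde\theta_{k+1} = \hat\theta_k + \mu D_k^\top v_k$.

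First I would substitute this update into $\|\tilde\theta_{k+1}-\theta^\ast\|^2 = \|e_k + \mu D_k^\top v_k\|^2$ and expand the square. The linear cross term becomes $2\mu\, e_k^\top D_k^\top v_k = -2\mu\, \tilde x_{1|k}^\top v_k$ using $D_k e_k = -\tilde x_{1|k}$, and the quadratic term is $\mu^2\|D_k^\top v_k\|^2$. The key step is to bound the quadratic term via the gain condition \eqref{LMS_gain-definition}: since $x_k\in\mathbb{Z}$ and $u_k\in\mathbb{U}$, we have $\|D_k\|^2\leq 1/\mu$, hence $\mu^2\|D_k^\top v_k\|^2 \leq \mu^2\|D_k\|^2\|v_k\|^2 \leq \mu\|v_k\|^2$. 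Writing out $\|v_k\|^2 = \|\tilde x_{1|k}\|^2 + 2\tilde x_{1|k}^\top w_k + \|w_k\|^2$ and combining with the cross term, the indefinite terms $\pm 2\mu\,\tilde x_{1|k}^\top w_k$ cancel exactly, leaving $\|\tilde\theta_{k+1}-\theta^\ast\|^2 \leq \|e_k\|^2 - \mu\|\tilde x_{1|k}\|^2 + \mu\|w_k\|^2$. Because $\Theta$ is convex and $\theta^\ast\in\Theta$, the Euclidean projection \eqref{LMS_update_b} is non-expansive, so $\|e_{k+1}\| = \|\hat\theta_{k+1}-\theta^\ast\| \leq \|\tilde\theta_{k+1}-\theta^\ast\|$ and the same descent inequality holds for $\|e_{k+1}\|^2$. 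Summing over $k=0,\dots,T$ telescopes the left-hand side to $\|e_{T+1}\|^2 - \|e_0\|^2$; discarding the nonnegative term $\|e_{T+1}\|^2$ and dividing by $\mu$ gives \eqref{prop_LMS-bound_equation}.

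For the second claim \eqref{lemma_theta_diff_equation}, I would invoke non-expansiveness of the projection once more, this time comparing the projection of $\tilde\theta_{k+1}$ with the projection of $\hat\theta_k$ (the latter already lies in $\Theta$, hence equals its own projection). This yields $\|\hat\theta_{k+1}-\hat\theta_k\| \leq \|\tilde\theta_{k+1}-\hat\theta_k\| = \mu\|D_k^\top v_k\| \leq \mu\|D_k\|\,\|v_k\|$. Multiplying by $\|D(x,u)\|$ and applying the gain condition \eqref{LMS_gain-definition} twice, so that both $\|D_k\|$ and $\|D(x,u)\|$ are bounded by $1/\sqrt{\mu}$ on $\mathbb{Z}\times\mathbb{U}$, gives $\mu\|D(x,u)\|\,\|D_k\| \leq 1$ and therefore $\|D(x,u)\|\,\|\hat\theta_{k+1}-\hat\theta_k\| \leq \|v_k\| = \|\tilde x_{1|k}+w_k\|$.

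I expect the only delicate point to be the exact cancellation of the $\tilde x_{1|k}^\top w_k$ cross terms: this is what forces the disturbance supply rate $\|w_k\|^2$ to enter with coefficient one and no residual sign-indefinite contribution. The gain condition \eqref{LMS_gain-definition} is tailored precisely so that the quadratic term collapses to $\mu\|v_k\|^2$ and does not overwhelm this cancellation. The remaining steps—telescoping and the two applications of projection non-expansiveness—are routine.
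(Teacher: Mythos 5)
Your proof is correct, and its two halves relate to the paper differently. For the successive-estimate bound \eqref{lemma_theta_diff_equation} you reproduce the paper's argument exactly: non-expansiveness of the projection (valid since $\hat{\theta}_k\in\Theta$ equals its own projection and $\Theta$ is convex), the update equation \eqref{LMS_update_a}, Cauchy--Schwarz, and multiplication by $\|D(x,u)\|$ combined with the gain condition \eqref{LMS_gain-definition}. For the cumulative error bound \eqref{prop_LMS-bound_equation}, however, the paper gives no proof at all---it defers entirely to \cite[Lemma~5]{lorenzenRobustMPCRecursive2019a}---whereas you supply the complete dissipation argument: expanding $\|e_k+\mu D_k^\top v_k\|^2$ with $v_k=\tilde{x}_{1|k}+w_k$, noting the exact cancellation of the sign-indefinite cross terms $\pm 2\mu\,\tilde{x}_{1|k}^\top w_k$ between the linear term and the expanded supply $\mu\|v_k\|^2$, absorbing the quadratic term via $\mu\|D_k\|^2\leq 1$, applying projection non-expansiveness with $\theta^\ast\in\Theta$, and telescoping. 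This is (up to notation) the standard LMS analysis that underlies the cited lemma, so it is not a mathematically distinct route, but it buys self-containedness: a reader of your version can verify the proposition without consulting the reference, and your derivation makes explicit why the gain condition \eqref{LMS_gain-definition} is exactly tight enough for the disturbance energy to enter with coefficient one. One small remark: the hypothesis $x_k\in\mathbb{Z}$, $u_k\in\mathbb{U}$ is what licenses $\|D_k\|^2\leq 1/\mu$ at every step, and you correctly invoke it both in the quadratic-term bound and (twice) in the final step of \eqref{lemma_theta_diff_equation}; this is the only place the state/input bounds are needed.
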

\begin{proof}
    The proof of~\eqref{prop_LMS-bound_equation} can be found in \cite[Lemma~5]{lorenzenRobustMPCRecursive2019a}. Inequality \eqref{lemma_theta_diff_equation} stems from the fact that the projection operator is non-expansive, and is derived using the update equation and the Cauchy-Schwarz inequality: 
    \begin{align} \begin{split}
    \|\hat{\theta}_{k+1} - \hat{\theta}_k\|  \leq 
    \|\tilde{\theta}_{k+1}-\hat{\theta}_k\| \stackrel{\eqref{LMS_update_a}}{\leq} \mu \|D_k\| \cdot \|\tilde{x}_{1|k}+w_k\|. \label{prop_LMS_delta-theta-relation}
    \end{split}\end{align}
    Multiplying both sides with $\|D(x,u)\|$ and using our choice of $\mu$ (cf.~\eqref{LMS_gain-definition}) gives the result.
    \end{proof}
\subsection{Terminal cost design} \label{ssec-terminal-cost}
MPC schemes require a terminal cost for closed-loop performance guarantees~\cite{rawlingsModelPredictiveControl2020, amritEconomicOptimizationUsing2011}. In this section, we extend the standard design procedure in~\cite{amritEconomicOptimizationUsing2011} to incorporate online parameter adaptation and show how to design a simple terminal cost $\ell_\mathrm{f}(x,\theta)$. 
We center the design of terminal cost around the steady-state $(x,u)=0$ for simplicity.
Hence, all theoretical guarantees will be relative to the performance at this steady state. 

\begin{proposition}[Terminal cost] \label{prop_term-cost}
Consider $P_\mathrm{f}\in\mathbb{R}^{n\times n}$, $P_\mathrm{f}\succ 0$ and the vector-valued function $p: \mathbb{R}^d\rightarrow\mathbb{R}^n$ satisfying:
    \begin{subequations}\begin{align}
        & A(\theta)^\top P_\mathrm{f}A(\theta) - P_\mathrm{f} + \bar{Q} \preceq 0, \quad\forall \theta\in\Theta,\label{cost-function_design-a}\\
        & p(\theta):=\left[(I_n-A(\theta))^{-1}\right]^{\top} q, \label{cost-function_design-b}
    \end{align}\end{subequations}
    with $\bar{Q} = Q + H^\top \Lambda H$. Then, for all $x\in\mathbb{R}^n,\theta\in\Theta$, the terminal cost $\ell_\mathrm{f}(x,\theta) = \|x\|^2_{P_\mathrm{f}} + p(\theta)^\top x$ satisfies
    \begin{align}\label{prop_term-cost_equation}
        \ell_{\mathrm{f}}(A(\theta)x, \theta)-\ell_{\mathrm{f}}(x,\theta) \leq -\ell(x,0,s),
    \end{align}
    with $s=\max\{Hx-h,0\}$.
\end{proposition}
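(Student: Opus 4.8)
The plan is to expand both terminal costs explicitly and to split the resulting expression into a quadratic and a linear part in $x$. With $\ell_\mathrm{f}(x,\theta)=\|x\|^2_{P_\mathrm{f}}+p(\theta)^\top x$, substituting $A(\theta)x$ and subtracting yields
\begin{align*}
\ell_\mathrm{f}(A(\theta)x,\theta)-\ell_\mathrm{f}(x,\theta)=x^\top\!\big(A(\theta)^\top P_\mathrm{f}A(\theta)-P_\mathrm{f}\big)x+p(\theta)^\top(A(\theta)-I_n)x,
\end{align*}
so the quadratic and linear contributions can be handled separately.

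For the linear term I would exploit the specific choice of $p$ in~\eqref{cost-function_design-b}. Since $p(\theta)^\top=q^\top(I_n-A(\theta))^{-1}$, writing $A(\theta)-I_n=-(I_n-A(\theta))$ cancels the inverse and gives exactly $p(\theta)^\top(A(\theta)-I_n)x=-q^\top x$. This is the step where the definition of $p(\theta)$ is essential: it is engineered precisely to reproduce the linear part $q^\top x$ of the stage cost.

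For the quadratic term I would invoke the Lyapunov-type inequality~\eqref{cost-function_design-a}, which gives $A(\theta)^\top P_\mathrm{f}A(\theta)-P_\mathrm{f}\preceq-\bar{Q}$ and hence $x^\top\!\big(A(\theta)^\top P_\mathrm{f}A(\theta)-P_\mathrm{f}\big)x\leq-\|x\|^2_{\bar{Q}}$ for every $\theta\in\Theta$. Combining the two parts bounds the left-hand side by $-\|x\|^2_{\bar{Q}}-q^\top x$. It then remains to compare this with $-\ell(x,0,s)=-\|x\|^2_Q-q^\top x-\|s\|^2_\Lambda$ from~\eqref{problem_setup_eco-cost_2}. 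Substituting $\bar{Q}=Q+H^\top\Lambda H$ splits $\|x\|^2_{\bar{Q}}=\|x\|^2_Q+\|Hx\|^2_\Lambda$, so the claim reduces to the scalar inequality $\|s\|^2_\Lambda\leq\|Hx\|^2_\Lambda$.

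The only nontrivial point, which I expect to be the main obstacle, is this last slack comparison. I would argue componentwise: because $h\geq 0$ and $s=\max\{Hx-h,0\}$, each entry satisfies $0\leq[s]_i\leq\max\{[Hx]_i,0\}\leq|[Hx]_i|$, whence $[s]_i^2\leq[Hx]_i^2$. Since $\Lambda$ is diagonal and positive definite, summing the weighted squares preserves the inequality and yields $\|s\|^2_\Lambda\leq\|Hx\|^2_\Lambda$, which closes the proof. The diagonal structure of $\Lambda$ together with the nonnegativity of $h$ are exactly what make this final step go through.
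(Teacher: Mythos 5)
Your proposal is correct and follows essentially the same route as the paper's own proof: expand the terminal cost difference, cancel the linear part via $p(\theta)^\top(A(\theta)-I_n)=-q^\top$, bound the quadratic part with the Lyapunov inequality~\eqref{cost-function_design-a}, and reduce everything to the slack comparison $\|s\|_\Lambda^2\leq\|Hx\|_\Lambda^2$. The only difference is cosmetic ordering, and your componentwise argument for the slack bound (using $h\geq 0$ and diagonal $\Lambda\succ 0$) spells out a step the paper merely asserts.
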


\begin{proof}
    First, note that $\|\max\{Hx-h,0\}\|_\Lambda^2 \leq \|Hx\|^2_\Lambda$ for all $x\in \mathbb{R}^n$ with the diagonal matrix $\Lambda$. 
    Thus, 
    \begin{align}
        \ell(x,u,s) \leq \|x\|^2_{\bar{Q}} + \|u\|^2_R + q^\top x + r^\top u, \label{proof_term-cost_bound}
    \end{align}
    with $\bar{Q} = Q + H^\top \Lambda H$. To show that~\eqref{prop_term-cost_equation} holds, we use the shorthand $A_\theta=A(\theta)$ and expand the left side:
    \begin{align*}
         &\ell_{\mathrm{f}}(A_\theta x,\theta)-\ell_{\mathrm{f}}(x, \theta) \\
         &\qquad= \|A_\theta x\|^2_{P_\mathrm{f}} + p(\theta)^\top A_\theta x-\|x\|^2_{P_\mathrm{f}} - p(\theta)^\top x\\
         &\qquad\! \stackrel{\eqref{cost-function_design-b}}{=}-q^\top x+(A_\theta x)^\top P_\mathrm{f} (A_\theta x) - x^\top P_\mathrm{f} x \\
         &\qquad\!\stackrel{\eqref{cost-function_design-a}}{\leq} -q^\top x- x^\top \bar{Q}x \stackrel{\eqref{proof_term-cost_bound}}{\leq} -\ell(x,0,s). \qedhere
    \end{align*}
\end{proof}
Due to Assumption~\ref{assump_system-matrices}, we can compute a matrix $P_\mathrm{f}$ satisfying~\eqref{cost-function_design-a} using a semi-definite program with the vertices of the polytope $\Theta$, see, e.g.~\cite[Eq.~(33)]{kohlerLinearRobustAdaptive2020}. The vector $p(\theta)$ needs to be computed online for each parameter estimate~$\hat{\theta}_k$.

\subsection{Adaptive Economic MPC scheme} \label{ssec_AEMPC-scheme}
At time $k\in\mathbb{N}$, the proposed AE-MPC is given by 
\begin{subequations}\label{MPC-scheme_opt-problem}\begin{alignat}{2} 
    &\min_{u_{\cdot|k},\hat{x}_{\cdot|k},s_{\cdot|k}}&& \sum_{j=0}^{N-1}\! \ell(\hat{x}_{j|k},u_{j|k}, s_{j|k})\! +\! \ell_\mathrm{f}(\hat{x}_N, \hat{\theta}_k) \label{MPC-scheme_opt-problem-a}\\
    &\text{subject to}\quad &&\hat{x}_{j+1|k} = A(\hat{\theta}_k)x_{j|k}+B(\hat{\theta}_k)u_{j|k} \label{MPC-scheme_opt-problem-b}\\
    & \quad &&H\hat{x}_{j|k}\leq h+s_{j|k} \label{MPC-scheme_opt-problem-c}\\
    & \quad &&u_{j|k}\in\mathbb{U} \label{MPC-scheme_opt-problem-d}\\
    & \quad &&\hat{x}_{0|k} = x_k\label{MPC-scheme_opt-problem-e} \\
    &\quad && \quad\qquad \qquad \qquad \forall\, j\in\mathbb{N}_{[0,N-1]}. \notag
\end{alignat} \end{subequations}
With~\eqref{MPC-scheme_opt-problem-b}, we predict the system's trajectory with the LMS estimate and initialize these predictions with the measurement of the current state~\eqref{MPC-scheme_opt-problem-e} over the prediction horizon $N\in\mathbb{N}$. The soft state constraints are given in~\eqref{MPC-scheme_opt-problem-c}, the input constraints are imposed by~\eqref{MPC-scheme_opt-problem-d}, and we minimize the economic cost and the penalty of slack variables together with the terminal cost by using the cost function~\eqref{MPC-scheme_opt-problem-a}. The solution of the optimization problem~\eqref{MPC-scheme_opt-problem} is denoted by $u_{\cdot|k}^\ast,\hat{x}_{\cdot|k}^\ast,s_{\cdot|k}^\ast$ and the corresponding minimum of the cost function by $V_N^\ast(x_k,\hat{\theta}_k)$. The optimal control input $u_{0|k}^\ast$ is applied to the system~\eqref{DT-system-equation}. 
Note that Problem~\eqref{MPC-scheme_opt-problem} is a convex quadratic program which is feasible for all $(x_k,\hat{\theta}_k)$ since no hard state or terminal set constraints are imposed. 

A summary of the offline design and the online computations is given by Algorithm~\ref{AEMPC-algorithm}.
\begin{algorithm}
\caption{Adaptive Economic MPC}\label{AEMPC-algorithm}
\begin{algorithmic}[0]
\State Choose $\mu$ as in~\eqref{LMS_gain-definition}, compute $P_\mathrm{f}$ as in~\eqref{cost-function_design-a}.
\For{$k\in\mathbb{N}$}
    \State Measure the state $x_k$.
    \State Adapt the parameter $\hat{\theta}_k \in \Theta$ using LMS~\eqref{LMS_update-eqs}.
    \State Update linear terminal cost $p^\top \gets q^\top (I_n-A(\hat{\theta}_k))^{-1}$.
    \State Solve the optimization problem~\eqref{MPC-scheme_opt-problem}.
    \State Apply the control input $u_k \gets u^\ast_{0|k}$.
\EndFor
\end{algorithmic}
\end{algorithm}

\section{Theoretical Analysis} \label{sec_theoretical-results}
In this section, we present asymptotic and transient performance guarantees for asymptotic and transient bounds on the disturbances $w_k$. 
First, we derive an asymptotic performance bound assuming finite-energy disturbances (Sec.~\ref{sec_theory_asymptotic}). Then, we consider point-wise bounded disturbances and obtain a transient performance bound (Sec~\ref{sec_theory_trans}). Finally, we provide a discussion on the theoretical properties (Sec.~\ref{sec_discuss}).

\subsection{Asymptotic performance}
\label{sec_theory_asymptotic}
In the following, we provide an asymptotic average performance bound for finite-energy disturbances.
\begin{assumption}[Finite-energy disturbances]\label{assump_finite-energy}
    There exists a finite constant $S_\mathrm{w}$, such that
    \begin{align} \label{assump_finite-energy_equation}
        \lim_{T\rightarrow\infty}\sum_{k=0}^T \|w_k\|^2 \leq S_{\mathrm{w}}.
    \end{align}
\end{assumption}
\begin{theorem}[Asymptotic average performance] \label{thm_asymp-perf}
    Let Assumptions~\ref{assump_system-matrices} and~\ref{assump_finite-energy} hold. 
    Then, the closed-loop system resulting from Algorithm~\ref{AEMPC-algorithm} satisfies
    \begin{align} \label{thm_asymp-perf_equation}
        \limsup_{T\rightarrow\infty} \frac{\sum_{k=0}^{T-1} \ell(x_k,u_k, s_{0|k}^\ast)}{T} \leq 0.
    \end{align}
\end{theorem}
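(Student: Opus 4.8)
The plan is to use the optimal value function $V_N^\ast(x_k,\hat\theta_k)$ as a storage function and to establish a dissipation inequality of the form $V_N^\ast(x_{k+1},\hat\theta_{k+1}) \le V_N^\ast(x_k,\hat\theta_k) - \ell(x_k,u_k,s_{0|k}^\ast) + \delta_k$ along the closed loop, where $\delta_k$ collects the mismatch caused by the disturbance, the parametric prediction error, and the online change of the estimate. Summing this telescoping inequality over $k\in\mathbb{N}_{[0,T-1]}$, dividing by $T$, and showing that $\tfrac1T\sum_k\delta_k\to 0$ then yields \eqref{thm_asymp-perf_equation}. As a preliminary step I would record that, since the inputs are confined to the compact set $\mathbb{U}$, the prediction dynamics \eqref{MPC-scheme_opt-problem-b} are Schur by Assumption~\ref{assump_system-matrices}, and $x_k\in\mathbb{Z}$, the predicted trajectories $\hat{x}_{j|k}^\ast$ stay in a compact set uniformly in $k,j$; together with continuity of the cost this also gives a uniform lower bound $V_N^\ast\ge c_{\min}>-\infty$.

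To obtain the dissipation inequality I would build a feasible candidate at time $k+1$ by the standard shift: set $\bar u_{j|k+1}=u_{j+1|k}^\ast$ for $j\in\mathbb{N}_{[0,N-2]}$, append the terminal control $\bar u_{N-1|k+1}=0$, and propagate from the \emph{measured} state $\bar x_{0|k+1}=x_{k+1}$ under $A(\hat\theta_{k+1})$, with slacks $\bar s_{j|k+1}=\max\{H\bar x_{j|k+1}-h,0\}$. The deviation $e_j:=\bar x_{j|k+1}-\hat{x}_{j+1|k}^\ast$ then obeys $e_{j+1}=A(\hat\theta_{k+1})e_j + D(\hat{x}_{j+1|k}^\ast,u_{j+1|k}^\ast)(\hat\theta_{k+1}-\hat\theta_k)$, initialized at $e_0=x_{k+1}-\hat{x}_{1|k}^\ast = \tilde{x}_{1|k}+w_k$ (using $u_k=u_{0|k}^\ast$ and the affine parametrization). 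Assumption~\ref{assump_system-matrices} makes $A(\hat\theta_{k+1})$ a contraction in the $P$-norm, so each $\|e_j\|$ is bounded by a constant multiple of $\|\tilde{x}_{1|k}+w_k\|$ and $\|\hat\theta_{k+1}-\hat\theta_k\|$. For the appended stage I would invoke Proposition~\ref{prop_term-cost}: since $\bar x_{N|k+1}=A(\hat\theta_{k+1})\bar x_{N-1|k+1}$, inequality \eqref{prop_term-cost_equation} gives $\ell(\bar x_{N-1|k+1},0,\bar s_{N-1|k+1})+\ell_\mathrm{f}(\bar x_{N|k+1},\hat\theta_{k+1})\le \ell_\mathrm{f}(\bar x_{N-1|k+1},\hat\theta_{k+1})$, which is then compared with the old terminal term $\ell_\mathrm{f}(\hat{x}_{N|k}^\ast,\hat\theta_k)$.

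Collecting the stage-by-stage differences, using that the candidate cost upper-bounds $V_N^\ast(x_{k+1},\hat\theta_{k+1})$ by optimality, and cancelling the first stage $-\ell(x_k,u_k,s_{0|k}^\ast)$, the residual $\delta_k$ consists only of terms that are linear and quadratic in the $e_j$, together with the terminal-cost sensitivity $\|p(\hat\theta_{k+1})-p(\hat\theta_k)\|$. Using the uniform boundedness of the predicted states, the Lipschitz continuity of $x\mapsto\max\{Hx-h,0\}$ and of $\theta\mapsto p(\theta)$ on the compact set $\Theta$, and the bound \eqref{prop_LMS_delta-theta-relation} that dominates $\|\hat\theta_{k+1}-\hat\theta_k\|$ by $\|\tilde{x}_{1|k}+w_k\|$, I would obtain $\delta_k \le c_1\|\tilde{x}_{1|k}+w_k\| + c_2\|\tilde{x}_{1|k}+w_k\|^2$ with constants $c_1,c_2$ independent of $k$. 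Proposition~\ref{prop_LMS-bound} and Assumption~\ref{assump_finite-energy} make $\sum_{k}\|\tilde{x}_{1|k}+w_k\|^2$ finite, so the quadratic part of $\sum_k\delta_k$ stays bounded while the linear part is $O(\sqrt{T})$ by Cauchy--Schwarz. Dividing by $T$ and using $V_N^\ast\ge c_{\min}$ sends both the boundary and the error contributions to zero, which gives $\limsup_{T\to\infty}\tfrac1T\sum_{k=0}^{T-1}\ell(x_k,u_k,s_{0|k}^\ast)\le 0$.

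The main obstacle I anticipate is the bookkeeping in the dissipation inequality: unlike the nominal economic MPC argument, the candidate is propagated under a \emph{different} model $A(\hat\theta_{k+1})$ than the one that generated the optimal trajectory at time $k$, and it is initialized at the \emph{disturbed and mismatched} state $x_{k+1}$. Keeping the resulting error $\delta_k$ purely linear-quadratic in $\|\tilde{x}_{1|k}+w_k\|$, so that the LMS bound applies, requires that the per-step parameter increment $\hat\theta_{k+1}-\hat\theta_k$ be dominated by that same quantity, which is exactly what \eqref{lemma_theta_diff_equation} and \eqref{prop_LMS_delta-theta-relation} provide, and it relies crucially on the uniform compactness of the predicted trajectories to make all constants $k$-independent.
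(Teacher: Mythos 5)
Your proposal is correct and follows essentially the same route as the paper's proof: the shifted candidate with appended zero input, the terminal-cost decrease of Proposition~\ref{prop_term-cost} applied to absorb the extra stage, the error recursion $e_{j+1}=A(\hat\theta_{k+1})e_j+D(\cdot)\Delta\theta_k$ with $e_0=\tilde{x}_{1|k}+w_k$, domination of $\Delta\theta_k$ via \eqref{lemma_theta_diff_equation}/\eqref{prop_LMS_delta-theta-relation}, and compactness to make all constants uniform are exactly the paper's steps leading to its inequality \eqref{proof_asymp_intermediate-result}. The one substantive difference is the concluding limit argument: the paper bounds the residual purely linearly in $\|\tilde{x}_{1|k}\|+\|w_k\|$ (via uniform Lipschitz constants) and then invokes its Appendix Lemma~\ref{lemma_asymp-lin-bound}, which rests on the $\mathcal{K}_\infty$-series result of Lemma~\ref{lemma_infinite-series}, to show the time average of these norms vanishes; you instead keep a linear-plus-quadratic residual and finish with Cauchy--Schwarz, getting the quadratic part summable and the linear part $O(\sqrt{T})$. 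Your variant is more elementary (no $\mathcal{K}_\infty$ machinery) and yields an explicit $O(1/\sqrt{T})$ decay of the average --- it is in fact the same technique the paper deploys later for the transient bound of Theorem~\ref{thm_transient-perf}, so your proof effectively subsumes both results; the paper's lemma-based route is what it needs anyway to state the purely asymptotic claim under Assumption~\ref{assump_finite-energy} and keeps the two theorems' arguments modular.
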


\begin{proof}
We first define a feasible candidate solution and provide a bound on the one-step decrease of the value function in~\eqref{proof_asyp_p2-result}. Then, we use the LMS properties (Prop.~\ref{prop_LMS-bound}) and the finite-energy of the disturbances (Asm.~\ref{assump_finite-energy}) to show~\eqref{thm_asymp-perf_equation}.

\textit{Candidate solution:} Recall that $\hat{x}_{\cdot|k}^\ast$, $u^\ast_{\cdot|k}$, $s^\ast_{\cdot|k}$, denote the optimal solution to Problem~\eqref{MPC-scheme_opt-problem} at time $k\in\mathbb{N}$ and let us denote $u^\ast_{N|k}=0$, $\hat{x}^\ast_{N+1|k}=A(\hat{\theta}_k)\hat{x}^\ast_{N|k}$.
We consider the candidate input sequence ${u}_{j|k+1} = u_{j+1|k}^\ast$, $j\in\mathbb{N}_{[0,N-1]}$.
The corresponding state sequence $\hat{x}_{j|k+1}$, $j\in\mathbb{N}_{[0,N]}$ is given by $\hat{x}_{0|k+1} = x_{k+1}$ and using~\eqref{MPC-scheme_opt-problem-b} with $\hat{\theta}_{k+1}$; the slack variables are chosen as $s_{j|k+1}=\max\{H\hat{x}_{j|k+1}-h,0\}$. 
We denote $\ell^\ast_{j|k} = \ell(\hat{x}_{j|k}^\ast,u^\ast_{j|k}, s_{j|k}^\ast)$, ${\ell}_{j|k+1} = \ell(\hat{x}_{j|k+1},{u}_{j|k+1}, s_{j|k+1})$.

\textit{Difference in cost:} 
    The increase of the value function $V_N^\ast$ can be upper bounded by the cost of the (feasible) candidate sequence:
    \begin{align}\label{proof_asymp_cost-diff}
         &V_N^\ast(x_{k+1},\, \hat{\theta}_{k+1}) - V_N^\ast(x_k, \hat{\theta}_k)+\ell_{0|k}^\ast \nonumber \\ 
         \leq& \sum_{j=1}^{N-1}( \ell_{j-1|k+1}-\ell^\ast_{j|k} )+ {\ell}_{N-1|k+1} \\
         & + \ell_{\mathrm{f}}(\hat{x}_{N|k+1},\hat{\theta}_{k+1}) - \ell_{\mathrm{f}}(\hat{x}_{N|k}^\ast,\hat{\theta}_{k}).\nonumber
    \end{align}      
    In the following, we first bound the increase due to disturbances and parameter update, and then bound the last terms using the properties of the terminal cost.
    Let $\Delta x_{j|k}=\hat{x}_{j-1|k+1}- \hat{x}_{j|k}^\ast$, ${\Delta \theta_k = \hat{\theta}_{k+1}-\hat{\theta}_k}$, 
    and recall that the states, inputs, and parameters all lie in some compact sets.
    Hence, the linear quadratic stage cost from~\eqref{problem_setup_eco-cost_2} has a uniform Lipschitz constant $L\geq 0$:
    \begin{align}
        &\Delta \ell_{j|k} := {\ell}_{j-1|k+1}-\ell^\ast_{j|k} \leq L \|\Delta x_{j|k}\|. \label{proof_asymp_Lipschitz-stage}
    \end{align}
    Similarly, $\ell_\mathrm{f}$ is Lipschitz-continuous\footnote{
    Note that Condition~\eqref{cost-function_design-a} ensures stability of $A(\theta)$ and that ${\|(I-A(\theta))^{-1}\|}$ is uniformly bounded. Given that also $A(\theta)$ is affine in $\theta$, we have that $p(\theta)$ in~\eqref{cost-function_design-b} is Lipschitz continuous w.r.t. $\theta\in\Theta$. Thus,  $\ell_{\mathrm{f}}$ is Lipschitz w.r.t. $(x,\theta)$ on the compact set $(x,\theta)\in\mathbb{Z}\times\Theta$.} with some uniform constant $L_{\mathrm{f}}\geq 0$, which implies
    \begin{align} 
        &\ell_{\mathrm{f}}(\hat{x}_{N|k+1},\hat{\theta}_{k+1}) - \ell_{\mathrm{f}}(\hat{x}_{N|k}^\ast,\hat{\theta}_k)         \label{proof_asymp_term-bound}\\
        \stackrel{\eqref{prop_term-cost_equation}}{\leq} &
        \ell_{\mathrm{f}}(\hat{x}_{N|k+1},\hat{\theta}_{k+1}) -\ell_{\mathrm{f}}(\hat{x}^\ast_{N+1|k},\hat{\theta}_{k}) -\ell_{N|k}^\ast \notag\\
        \leq\,& L_\mathrm{f} (\|\Delta x_{N+1|k}\|+\|\Delta \theta_k\|) - \ell_{N|k}^\ast,\notag
    \end{align}
    where we denoted $\ell_{N|k}^\ast=\ell(\hat{x}_{N|k}^\ast, 0, s^\ast_{N|k})$ and $s^\ast_{N|k} =\max\{Hx_{N|k}^\ast-h,0\}$.
    Combining~\eqref{proof_asymp_cost-diff}--\eqref{proof_asymp_term-bound} gives
    \begin{align}
    &V_N^\ast(x_{k+1}, \hat{\theta}_{k+1}) - V_N^\ast(x_k, \hat{\theta}_k)+\ell_{0|k}^\ast\notag \\
    \notag
    & \stackrel{\eqref{proof_asymp_cost-diff}}{\leq}\! \sum_{j=1}^{N-1} \! \Delta \ell_{j|k} \! + \ell_{N-1|k+1} \! + \! \ell_{\mathrm{f}}(\hat{x}_{N|k+1},\hat{\theta}_{k+1}) \! - \ell_{\mathrm{f}}(\hat{x}_{N|k}^\ast,\hat{\theta}_k)\\
    \notag
    &\stackrel{\eqref{proof_asymp_term-bound}}{\leq} \sum_{j=1}^{N} \Delta \ell_{j|k} + L_\mathrm{f} \|\Delta x_{N+1|k}\| 
     +L_{\mathrm{f}}\|\Delta \theta_k\|\\ 
    &\stackrel{\eqref{proof_asymp_Lipschitz-stage}}{\leq } L_\mathrm{f} \|\Delta x_{N+1|k}\| +L_{\mathrm{f}}\|\Delta \theta_k\|+L \cdot \sum_{j=1}^{N} \|\Delta x_{j|k}\|. \label{proof_asyp_p2-result} 
    \end{align}

    \textit{Combination with the LMS estimator:} Next, we use the LMS properties to further bound the terms in~\eqref{proof_asyp_p2-result}. Similar to~\cite{lorenzenRobustMPCRecursive2019a}, we can bound $\Delta x_{j|k}$: 
    \begin{align} \begin{split} \label{proof_asymp_delta-x}
        \Delta x_{j|k} &= A(\hat{\theta}_{k+1})^{j\shortminus1}(w_{k}+\tilde{x}_{1|k})\\
        &\quad +\sum_{i=1}^{j-1} A(\hat{\theta}_{k+1})^{j\shortminus 1\shortminus i}D(\hat{x}_{i|k},u_{i|k})\cdot\Delta \theta_k, \end{split}\\[2pt]
    \|\Delta x_{j|k}\| \quad &\bquad \stackrel{\eqref{proof_asymp_delta-x},\eqref{lemma_theta_diff_equation}}{\leq}  \sum_{i=0}^{j-1} \|A(\hat{\theta}_{k+1})^{j-1-i}\| \cdot\|w_{k}+\tilde{x}_{1|k}\|. \label{proof_asymp_bound-delta-x}
\end{align}
   Let us denote
   \begin{align} \label{proof_asymp_C_A-def}
       C_A' := \max_{j\in\mathbb{N}_{[1,N+1]}}\max_{\theta\in\Theta} \sum_{i=0}^{j-1} \|A(\theta)^{j-1-i}\|.
   \end{align}
    Combining the bounds and summing both sides from $k=0$ to $k=T-1$ yields  
    \begin{align} 
        &V_N^\ast(x_T, \hat{\theta}_T)-V_N^\ast(x_0, \hat{\theta}_0) + \sum_{k=0}^{T-1}\ell_{0|k}^\ast \notag \\ 
        & \stackrel{\eqref{proof_asyp_p2-result}}{\leq} \sum_{k=0}^{T-1}\left[L_\mathrm{f} \|\Delta x_{N+1|k}\|+L_{\mathrm{f}}\|\Delta \theta_k\| +L \sum_{j=1}^{N} \|\Delta x_{j|k}\|\right] \notag\\
        &\stackrel{\eqref{prop_LMS_delta-theta-relation}, \eqref{LMS_gain-definition}}{\leq} \sum_{k=0}^{T-1} \left[ L_\mathrm{f} \|\Delta x_{N+1|k}\| + L_{\mathrm{f}}\sqrt{\mu}\|\tilde{x}_{1|k} + w_k\| \vphantom{\sum_j^N} \right. \nonumber \\
         &\qquad \left.+ L \sum_{j=1}^{N}\|\Delta x_{j|k}\|\right] \notag\\
        & \stackrel{\eqref{proof_asymp_bound-delta-x},\eqref{proof_asymp_C_A-def}}{\leq} C_A\sum_{k=0}^{T-1} \left[  \|w_k\|+\|\tilde{x}_{1|k}\|\right],  \label{proof_asymp_intermediate-result}
    \end{align} 
    where the last step used the triangular inequality and $C_A:=\left[L_\mathrm{f}+(N-1)L\right]C_A'+L_\mathrm{f}\sqrt{\mu}$. 
    Note that 
    \begin{equation*}
        \lim_{T\rightarrow\infty} \frac{-V_N^\ast(x_T,\hat{\theta}_T)+V_N^\ast(x_0,\hat{\theta}_0)}{T}=0,
    \end{equation*}
    since $V_N^\ast(x_k,\hat{\theta}_k)$ admits a uniform bound using $x_k\in \mathbb{Z},~\hat{\theta}_k\in\Theta$, and compactness of $\mathbb{Z}$, $\Theta$. Thanks to Assumption~\ref{assump_finite-energy}, Lemma~\ref{lemma_asymp-lin-bound} from the Appendix ensures
    \begin{align} \label{lemma_asymp-lin-bound_equation}
         \lim_{T \rightarrow \infty} \frac{\sum_{k=0}^{T-1} \|\tilde{x}_{1|k}\| + \|w_k\|}{T} =0.
    \end{align} 
    Thus, taking the limit $T\rightarrow\infty$ gives the final result
    \begin{align*}
        &-\limsup_{T\rightarrow\infty} \frac{V_N^\ast(x_T, \hat{\theta}_T)-V_N^\ast(x_0, \hat{\theta}_0)}{T}+\limsup_{T\rightarrow\infty}\frac{\sum_{k=0}^{T-1} \ell_{0|k}^\ast}{T}\\
        &\quad\stackrel{\eqref{proof_asymp_intermediate-result}}{\leq}\limsup_{T\rightarrow\infty} \sum_{k=0}^{T-1} \frac{1}{T}\left[C_A \|w_k\|+C_A\|\tilde{x}_{1|k}\|\right] \stackrel{\eqref{assump_finite-energy_equation}, \eqref{lemma_asymp-lin-bound_equation}}{\leq} 0,
    \end{align*}
    which shows~\eqref{thm_asymp-perf_equation}.
    \end{proof}

The asymptotic performance bound ensures that on average, the closed-loop system will not perform worse than the operation at the origin, independent of the magnitude of the initial parameter error.

In the special case that the economic cost $\ell_\text{eco}(x,u)$ is positive definite, we recover an adaptive MPC with a tracking objective~\cite{lorenzenRobustMPCRecursive2019a}, and Theorem~\ref{thm_asymp-perf} recovers the stability result in \cite[Corollary~16]{lorenzenRobustMPCRecursive2019a}. 
\begin{corollary}[Asymptotic convergence]
    Let Assumptions~\ref{assump_system-matrices} and~\ref{assump_finite-energy} hold and suppose $\ell(x,u,s)$ is positive definite, i.e. $Q,R\succ 0$, $q=0,r=0$. Then, $\lim_{k\rightarrow\infty} \|x_k\| =0$ for the closed-loop system resulting from Algorithm~\ref{AEMPC-algorithm}.
\end{corollary}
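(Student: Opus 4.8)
The plan is to upgrade the asymptotic average bound of Theorem~\ref{thm_asymp-perf} to pointwise convergence by using positive definiteness to establish summability of $\|x_k\|^2$. First I would record the simplifications induced by $q=0$, $r=0$: by~\eqref{cost-function_design-b} the linear terminal term vanishes, $p(\theta)\equiv 0$, so the terminal cost reduces to $\ell_\mathrm{f}(x,\theta)=\|x\|_{P_\mathrm{f}}^2\geq 0$ and is independent of $\theta$; consequently $V_N^\ast\geq 0$, while the stage cost~\eqref{problem_setup_eco-cost_2} obeys the two-sided bound $\lambda_{\min}(Q)\|x\|^2\leq\ell_{0|k}^\ast$ and $V_N^\ast(x,\theta)\leq\alpha_2\|x\|^2$. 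The quadratic upper bound follows from the feasible candidate $u\equiv 0$: since $0\in\mathbb{U}$ the input is admissible, the soft constraints render it feasible for every $x$, and the common Lyapunov function of Assumption~\ref{assump_system-matrices} makes $\sum_{j\geq 0}\|A(\theta)^j x\|^2\leq\|x\|_P^2$, so that the resulting cost (bounding $\|\max\{Hx-h,0\}\|_\Lambda^2\leq\|Hx\|_\Lambda^2$ as in the proof of Proposition~\ref{prop_term-cost}) is a convergent series of order $\|x\|^2$.

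The core step is then to re-derive the one-step decrease of $V_N^\ast$ keeping quadratic rather than Lipschitz bounds. Starting from the same shifted candidate as in Theorem~\ref{thm_asymp-perf}, each stage- and terminal-cost difference $\ell(\hat x_{j-1|k+1},\cdot)-\ell(\hat x_{j|k}^\ast,\cdot)$ expands, after cancellation of the identical input terms, into a quadratic term $\|\Delta x_{j|k}\|^2$ plus a cross term $2(\hat x_{j|k}^\ast)^\top(\cdot)\,\Delta x_{j|k}$, while the terminal telescoping still contributes $-\ell(\hat x_{N|k}^\ast,0,s_{N|k}^\ast)\leq 0$ via~\eqref{prop_term-cost_equation}. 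I would bound $\|\Delta x_{j|k}\|\leq C_A'\|w_k+\tilde x_{1|k}\|$ exactly as in~\eqref{proof_asymp_bound-delta-x} and split each cross term by Young's inequality $2a^\top Q b\leq\eta\,a^\top Q a+\eta^{-1}b^\top Q b$. The factors $\|\hat x_{j|k}^\ast\|^2$ so produced sum to at most a constant multiple of $V_N^\ast(x_k,\hat\theta_k)$, which by the quadratic upper bound is $\leq\alpha_2\|x_k\|^2$; choosing $\eta$ small enough that this is dominated by the $-\lambda_{\min}(Q)\|x_k\|^2$ coming from $-\ell_{0|k}^\ast$ yields, for some $c>0$, $C>0$,
\[
V_N^\ast(x_{k+1},\hat\theta_{k+1})-V_N^\ast(x_k,\hat\theta_k)\leq -c\|x_k\|^2+C\big(\|w_k\|^2+\|\tilde x_{1|k}\|^2\big).
\]

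To conclude, I would sum this inequality over $k=0,\dots,T-1$ and drop $V_N^\ast(x_T,\hat\theta_T)\geq 0$, giving $c\sum_{k=0}^{T-1}\|x_k\|^2\leq V_N^\ast(x_0,\hat\theta_0)+C\sum_{k=0}^{T-1}(\|w_k\|^2+\|\tilde x_{1|k}\|^2)$. Proposition~\ref{prop_LMS-bound} bounds $\sum_k\|\tilde x_{1|k}\|^2$ by $\tfrac{1}{\mu}\|\hat\theta_0-\theta^\ast\|^2+\sum_k\|w_k\|^2$, and Assumption~\ref{assump_finite-energy} bounds $\sum_k\|w_k\|^2$ by $S_\mathrm{w}$; hence the right-hand side is finite uniformly in $T$, so $\sum_{k=0}^{\infty}\|x_k\|^2<\infty$ and therefore $\|x_k\|\to 0$. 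This reproduces the mechanism of~\cite[Cor.~16]{lorenzenRobustMPCRecursive2019a}.

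I expect the main obstacle to be the second step. The linear (Lipschitz) bounds underlying~\eqref{proof_asyp_p2-result}--\eqref{proof_asymp_intermediate-result} deliver only the Ces\`aro-type average statement~\eqref{thm_asymp-perf_equation}, which is too weak for pointwise convergence since square-summability of $w_k$ and $\tilde x_{1|k}$ does not imply absolute summability. The quadratic refinement closes only because the cross terms can be absorbed into the decrease, and this absorption hinges precisely on the quadratic upper bound $V_N^\ast(x,\theta)\leq\alpha_2\|x\|^2$ — which is exactly what open-loop stability together with soft state constraints provide in this setting.
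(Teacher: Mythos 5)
Your proposal is correct, and it actually supplies more than the paper does: the paper gives no proof of this corollary at all, justifying it only by the remark that for positive definite cost the scheme reduces to the adaptive tracking MPC of~\cite{lorenzenRobustMPCRecursive2019a} and by pointing to Corollary~16 of that reference. Your argument reconstructs exactly the mechanism behind that citation---a quadratic ISS-type analysis of $V_N^\ast$---adapted to the soft-constrained, terminal-set-free setting used here. Your diagnosis of the central obstacle is also exactly right and worth emphasizing: the statement of Theorem~\ref{thm_asymp-perf}, and even the intermediate bound~\eqref{proof_asymp_intermediate-result}, control $\sum_k \ell_{0|k}^\ast$ only through the \emph{linear} quantities $\sum_k(\|w_k\|+\|\tilde{x}_{1|k}\|)$, which need not be finite under Assumption~\ref{assump_finite-energy} (square-summability does not imply summability); hence the corollary is not a trivial specialization of the theorem and genuinely requires the quadratic refinement you construct. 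The individual steps you outline all go through: with $q=0$ one has $p(\theta)\equiv 0$, so $\ell_\mathrm{f}(x,\theta)=\|x\|_{P_\mathrm{f}}^2\geq 0$ and $V_N^\ast(x_k,\hat{\theta}_k)\geq \ell_{0|k}^\ast\geq\lambda_{\min}(Q)\|x_k\|^2$; the zero-input candidate combined with~\eqref{cost-function_design-a} telescopes to $V_N^\ast(x,\theta)\leq\|x\|_{P_\mathrm{f}}^2$; every nonnegative summand of $V_N^\ast$ is bounded by $V_N^\ast$ itself, which is what lets the cross terms be estimated by $\sqrt{V_N^\ast(x_k,\hat{\theta}_k)}\cdot\|\Delta x_{j|k}\|$ and absorbed via Young's inequality; and~\eqref{proof_asymp_bound-delta-x} together with~\eqref{lemma_theta_diff_equation} converts $\|\Delta x_{j|k}\|$ into $C_A'\|w_k+\tilde{x}_{1|k}\|$, so that Proposition~\ref{prop_LMS-bound} and Assumption~\ref{assump_finite-energy} give $\sum_{k=0}^{\infty}\|x_k\|^2<\infty$ and thus $x_k\to 0$. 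The only detail you should spell out when writing this up is the slack term: the expansion of $\|s_{j-1|k+1}\|_\Lambda^2-\|s_{j|k}^\ast\|_\Lambda^2$ into cross plus quadratic terms uses that $x\mapsto \Lambda^{1/2}\max\{Hx-h,0\}$ is Lipschitz (componentwise $1$-Lipschitzness of $\max\{\cdot,0\}$) and that $s_{j|k}^\ast=\max\{H\hat{x}_{j|k}^\ast-h,0\}$ at the optimum since $\Lambda\succ0$ is diagonal; this is routine but not literally a cancellation of identical terms.
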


\subsection{Transient performance bound}\label{sec_theory_trans}
The previously shown asymptotic average performance does not allow for a statement about the AE-MPC's transient performance, i.e., its performance over a finite time $T$. In this subsection, we study the finite-time behavior in more detail. 
\begin{assumption}[Point-wise bounded disturbances] \label{assump_dist-average-bound}
    There exists a constant $\bar{w}\geq0$ such that 
    \begin{align} \label{assump_dist-average-bound_equation}
        \sum_{k=0}^{T-1} \|w_k\|^2 \leq T\bar{w}^2 \quad \ \forall\,T \in\mathbb{N}.
    \end{align}
\end{assumption}
Under Assumption~\ref{assump_dist-average-bound}, the disturbances are bounded by a uniform constant at all times. This is weaker than assuming finite-energy, because we do not require $\|w_k\|^2\rightarrow0$ for $k\rightarrow\infty$.

\begin{theorem}[Transient performance bound] \label{thm_transient-perf}
     Let Assumptions~\ref{assump_system-matrices} and~\ref{assump_dist-average-bound} hold. Then, there exist uniform constants $C_V, C_A\geq 0$ such that for all $T\in\mathbb{N}$, the closed-loop system resulting from Algorithm~\ref{AEMPC-algorithm} satisfies
    \begin{align}\label{thm_transient-perf_equation}\begin{split}
        &\sum_{k=0}^{T-1} \ell(x_k,u_k, s_{0|k}^\ast) \leq C_V+2\bar{w}C_AT+\frac{C_A}{\sqrt{\mu}}\|\hat{\theta}_0-\theta^\ast\|\sqrt{T}.
    \end{split}
    \end{align}
\end{theorem}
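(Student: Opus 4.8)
The plan is to reuse the intermediate inequality~\eqref{proof_asymp_intermediate-result} established in the proof of Theorem~\ref{thm_asymp-perf}. The crucial observation is that this bound was derived using only Assumption~\ref{assump_system-matrices}, the LMS properties of Proposition~\ref{prop_LMS-bound}, and the Lipschitz continuity of the stage and terminal costs; it does not invoke the finite-energy assumption and therefore remains available here. Recalling that $\ell_{0|k}^\ast = \ell(x_k,u_k,s_{0|k}^\ast)$ since $\hat{x}_{0|k}^\ast = x_k$ and $u_{0|k}^\ast = u_k$, I would first rearrange~\eqref{proof_asymp_intermediate-result} into
\begin{align*}
    \sum_{k=0}^{T-1} \ell(x_k,u_k,s_{0|k}^\ast) \leq V_N^\ast(x_0,\hat{\theta}_0) - V_N^\ast(x_T,\hat{\theta}_T) + C_A \sum_{k=0}^{T-1}\big(\|w_k\| + \|\tilde{x}_{1|k}\|\big).
\end{align*}
The first two terms are bounded by a uniform constant $C_V$, because $V_N^\ast(x_k,\hat{\theta}_k)$ admits a uniform bound over the compact sets $\mathbb{Z}\times\Theta$ (as already noted in the proof of Theorem~\ref{thm_asymp-perf}); this accounts for the $C_V$ term in~\eqref{thm_transient-perf_equation}.

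The remaining work is to bound the two sums of norms. Here I would replace the Ces\`aro-type limiting argument of Theorem~\ref{thm_asymp-perf} by the Cauchy--Schwarz inequality, which converts the sums of norms into sums of squared norms that the available bounds control directly. For the disturbance term this gives $\sum_{k=0}^{T-1}\|w_k\| \leq \sqrt{T}\,\sqrt{\sum_{k=0}^{T-1}\|w_k\|^2} \leq T\bar{w}$ by Assumption~\ref{assump_dist-average-bound}. For the prediction-error term, $\sum_{k=0}^{T-1}\|\tilde{x}_{1|k}\| \leq \sqrt{T}\,\sqrt{\sum_{k=0}^{T-1}\|\tilde{x}_{1|k}\|^2}$, and combining the LMS bound~\eqref{prop_LMS-bound_equation} with Assumption~\ref{assump_dist-average-bound} bounds the inner sum by $\tfrac{1}{\mu}\|\hat{\theta}_0-\theta^\ast\|^2 + T\bar{w}^2$.

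Finally, applying the subadditivity of the square root, $\sqrt{a+b}\leq\sqrt{a}+\sqrt{b}$, yields $\sum_{k=0}^{T-1}\|\tilde{x}_{1|k}\| \leq \tfrac{\sqrt{T}}{\sqrt{\mu}}\|\hat{\theta}_0-\theta^\ast\| + T\bar{w}$. Adding the disturbance contribution then produces $C_A\sum_{k=0}^{T-1}(\|w_k\|+\|\tilde{x}_{1|k}\|) \leq 2\bar{w}C_A T + \tfrac{C_A}{\sqrt{\mu}}\|\hat{\theta}_0-\theta^\ast\|\sqrt{T}$, which together with the $C_V$ term reproduces~\eqref{thm_transient-perf_equation} with the same constant $C_A$ as in~\eqref{proof_asymp_intermediate-result}. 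I do not anticipate a genuinely hard step, since the analytically demanding part — the candidate-solution construction and the value-function decrease bound leading to~\eqref{proof_asymp_intermediate-result} — is already in place. The only point requiring care is recognizing that~\eqref{proof_asymp_intermediate-result} is independent of the disturbance model, so that the weaker Assumption~\ref{assump_dist-average-bound} suffices, and then tracking the Cauchy--Schwarz and subadditivity estimates precisely enough to land on the stated coefficients $2\bar{w}C_A$ and $C_A/\sqrt{\mu}$ rather than looser constants.
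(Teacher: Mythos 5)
Your proposal is correct and follows essentially the same route as the paper's proof: reuse of the intermediate bound~\eqref{proof_asymp_intermediate-result} (valid under the weaker Assumption~\ref{assump_dist-average-bound}), Cauchy--Schwarz to convert sums of norms into the squared sums controlled by~\eqref{prop_LMS-bound_equation} and~\eqref{assump_dist-average-bound_equation}, subadditivity of the square root, and compactness of $\mathbb{Z}\times\Theta$ to define $C_V$. The constants $2\bar{w}C_A$ and $C_A/\sqrt{\mu}$ you obtain match those in the paper exactly.
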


\begin{proof}
    First, note that the derivation of \eqref{proof_asymp_intermediate-result} in the proof of Theorem~\ref{thm_asymp-perf} remains valid with the bound of Assumption~\ref{assump_dist-average-bound}.
    Using the LMS parameter estimate yields
    \begin{align} \label{proof_transient_LMS-bound}
        \sum_{k=0}^{T-1} \|\tilde{x}_{1|k}\|^2 \stackrel{\eqref{prop_LMS-bound_equation},\eqref{assump_dist-average-bound_equation}}{\leq} T \bar{w}^2+\frac{1}{\mu}\|\hat{\theta}_0-\theta^\ast\|^2.
    \end{align}
    The one-step prediction error $\tilde{x}_{1|k}$ satisfies
    \begin{align}\label{proof_transient_x-bound}
            \sum_{k=0}^{T-1} \|\tilde{x}_{1|k}\| 
            \leq \sqrt{T}\sqrt{\sum_{k=0}^{T-1}\|\tilde{x}_{1|k}\|^2}    \stackrel{\eqref{proof_transient_LMS-bound}}{\leq} T\bar{w}+\sqrt{\frac{T}{\mu}}\|\hat{\theta}_0-\theta^\ast\|,
        \end{align}
        where the first inequality used the fact that $\|X\|_1\leq \sqrt{T}\|X\|$ for any $X\in\mathbb{R}^T$ with $X_i=\|\tilde{x}_{1|i}\|$ and the second inequality makes use of ${\sqrt{a+b} \leq \sqrt{a}+\sqrt{b}}$ for any $a,b\geq0$. Similarly, for the disturbances $w_k$ we obtain
        \begin{align}
        \begin{split} \label{proof_transient_w-bound}
            \sum_{k=0}^{T-1} \|w_k\|
            &\leq \sqrt{T}\sqrt{\sum_{k=0}^{T-1} \|w_k\|^2}\stackrel{\eqref{assump_dist-average-bound_equation}}{\leq} \sqrt{T}\sqrt{T\bar{w}^2} = T\bar{w}.
        \end{split}
    \end{align}    
    Next, we combine these transient bounds with~\eqref{proof_asymp_intermediate-result} to obtain the transient performance bound:
    \begin{align*}
        \begin{split}
            &V_N^\ast(x_T, \hat{\theta}_T)-V_N^\ast(x_0, \hat{\theta}_0)\\
            &\qquad\stackrel{\eqref{proof_asymp_intermediate-result}}{\leq} \sum_{k=0}^{T-1}\left[-\ell_{0|k}^\ast + C_A \|\tilde{x}_{1|k}\| + C_A \|w_{k}\| \right]\\ 
            &\quad\, \stackrel{\eqref{proof_transient_x-bound},\eqref{proof_transient_w-bound}}{\leq} \sum_{k=0}^{T-1}\left[-\ell_{0|k}^\ast\right] +C_A \sqrt{\frac{T}{\mu}}\|\hat{\theta}_0-\theta^\ast\| + 2C_A \bar{w}T, 
        \end{split}
    \end{align*}
    where $C_A$ is defined after~\eqref{proof_asymp_intermediate-result}. Defining the uniform constant $-C_V\leq V_N^\ast(x_k, \hat{\theta}_k)-V_N^\ast(x_0, \hat{\theta}_0)~\forall k\in\mathbb{N}$, using compactness of $\Theta$ and $\mathbb{Z}$, yields~\eqref{thm_transient-perf_equation}.
\end{proof}

\subsection{Discussion} \label{sec_discuss}
The asymptotic performance bound of Theorem~\ref{thm_asymp-perf} shows that on average, our AE-MPC scheme performs no worse than operating the system at the origin. Our performance bound is relative to the performance at the origin due to centering the terminal cost around it (cf. Prop.~\ref{prop_term-cost}). Even with bounded energy disturbances and uncertain initial parameters $\hat{\theta}_0\neq\theta^\ast$, Theorem~\ref{thm_asymp-perf} establishes the same asymptotic performance guarantees as economic MPC without disturbances and with perfectly known parameters~\cite[Theorem~18]{amritEconomicOptimizationUsing2011}. The transient performance bound of Theorem~\ref{thm_transient-perf} provides insight into the behavior under persistent disturbances. This transient performance bound depends linearly on the magnitude of the disturbances $\bar{w}$. Furthermore, the system's transient performance depends on the initial parameter error. For ${\bar{w}\rightarrow 0}$ and ${T\rightarrow\infty}$, the transient performance bound (Thm.~\ref{thm_transient-perf}) recovers the asymptotic performance bound (Thm.~\ref{thm_asymp-perf}).

The AE-MPC scheme considers the special case of soft state constraints which is useful for a variety of systems (e.g., temperature control). Its implementation is neither conceptually nor computationally more complex than implementing nominal MPC with soft constraints.
We expect that the MPC scheme can be extended with a robust tube MPC formulation (e.g.,~\cite{kohlerLinearRobustAdaptive2020}) to ensure robust constraint satisfaction and relax open-loop stability to stabilizability of $(A(\theta),B(\theta))$.

\section{Numerical example} \label{sec_numerical-example}
We illustrate the theoretical results using a simple building temperature control example and study the performance benefits of online adaptation for economic MPC. The corresponding MATLAB code is available online.\footnote{\href{https://doi.org/10.3929/ethz-b-000690654}{https://doi.org/10.3929/ethz-b-000690654}}
\subsubsection*{Model and setup}
We model the thermal dynamics of a single temperature zone in a building using a 2R2C-model~\cite{wangDevelopmentRCModel2019}. 
The discrete-time model equations are given by
\begin{align*}
\begin{split}
    x_{k+1} \!=\! 
    \begin{bmatrix}
        1\shortminus\frac{[\theta]_1}{a_1} &\!\! \frac{[\theta]_1}{a_1}\\
        \frac{[\theta]_1}{a_2} &\!\! 1\shortminus\frac{[\theta]_1+[\theta]_2}{a_2} 
    \end{bmatrix}
    \!x_k \!+\!
    \begin{bmatrix}
        \frac{1}{a_1}\\
        0 
    \end{bmatrix}\! u_k
    \!+\!\begin{bmatrix}
        \frac{1}{a_1} &\!\! 0\\ 0&\!\! \frac{[\theta]_2}{a_2} 
    \end{bmatrix}\! w_k,
\end{split}
\end{align*}
where $a_1 = 0.6125, a_2=21.12$ are constants that depend on the mass and the specific heat capacity of the wall materials. 
The room and wall temperature are the system's states $[x]_1,\ [x]_2$, respectively.
The input $u$ is the controllable heat flow into the room and the disturbances $w$ consist of a disturbance heat flow $[w]_1$ (e.g., solar irradiation) and the outdoor temperature $[w]_2$. Normal operation, i.e., room temperature of $22^\circ\,$C and a nominal heat flow, corresponds to $x=0,\ u=0$. 
We use a sampling time of $10$ minutes. 
The parameters $\theta$ reflect uncertainty in the thermal resistance. The parameter set is given $\Theta=[0.005, 0.144]\times[0.0004, 0.0075]$, with $\theta^\ast=[0.048, 0.0015]^\top$, $\hat{\theta}_0=[0.1, 0.0075]^\top$.
The disturbances take values according to 
\begin{align*}
    [w_k]_1 &= 0.595\left(\sin\left(\frac{2\pi}{24\cdot6}k\right)+\eta_k(-0.2,0.2)\right),\\ 
    [w_k]_2 &= 7\left(\sin\left(\frac{2\pi}{24\cdot6}(k-9)\right)+\eta_k(-0.5,0.5)\right), 
\end{align*}
where $\eta_k(a,b)$ is the value of a random variable, uniformly distributed in the interval $[a,b]$, and the phase shift $k-9$ models a lag of $90$ minutes of the outdoor temperature relative to the solar irradiation.
The period of the sinus is one day. To simplify the exposition and in accordance to the setting in this paper, the disturbances and their prediction are completely unknown to the controller.

We consider a heating problem and the room temperature is supposed to satisfy $[x]_1\geq -1$. The input is constrained by $\mathbb{U}=[-0.8, 0.8]$ and the economic cost of the system is $\ell_\text{eco}(x,u) = x^\top Qx + r^\top u$, where $Q=\diag(500,0), r = 600$. 
We want to keep the room temperature in a comfortable range and minimize the energy consumption; the factor $r>0$ reflects that increasing heating $u$ increases the cost.
The soft penalty on the temperature constraint is given by $\lambda=10^5$ and we use a horizon of $N=18$. 
The parameter update gain is $\mu=5\cdot 10^{-5}$, which satisfies~\eqref{LMS_gain-definition} with the conservative bound $x_k\in\mathbb{Z} = \{ [x]_1\in [-20,20], [x]_2 \in [-10,10]\}$. 
Note that this problem satisfies the conditions in this paper, in particular, Assumptions~\ref{assump_system-matrices} and \ref{assump_dist-average-bound} hold.

\subsubsection*{Results and comparison}
In the following, we compare the performance of the proposed AE-MPC scheme (Alg.~\ref{AEMPC-algorithm}) to the performance of economic MPC (E-MPC) without parameter adaptation, i.e., $\hat{\theta}_k=\hat{\theta}_0, \forall k\geq 0$.

The evolution of the parameter estimate $\hat{\theta}_k$ is shown in Figure~\ref{fig_LMS}. 
The parameter estimate improves over a few of days and the error to the true parameter reduces. After 20 days, the parameter estimate $\hat{\theta}_k$ does not change significantly anymore. 
However, since the considered disturbances are correlated and no active excitation is performed, some residual bias remains. 

In Figure~\ref{fig_state-input}, we show the evolution of the room temperature $[x]_1$ and the control input $u$ for two three-day periods (days zero to three and 20 to 23). Figure~\ref{fig_perf} depicts the accumulated costs. 
During an initial phase, the closed-loop evolution of AE-MPC and E-MPC is almost indistinguishable but afterwards, AE-MPC achieves significantly smaller costs, fewer constraint violations, and smaller oscillations of the room temperature due to the improved model. 
These results are consistent with Theorem~\ref{thm_transient-perf}, showing that the effect of the initial parameter error vanishes as $T\rightarrow\infty$.

Overall, this numerical example illustrates that the proposed AE-MPC enhances the economic performance during operation under persistent disturbances and parametric uncertainty.

\begin{figure}[!t]
    \centering
    \includegraphics[width=0.45\textwidth]{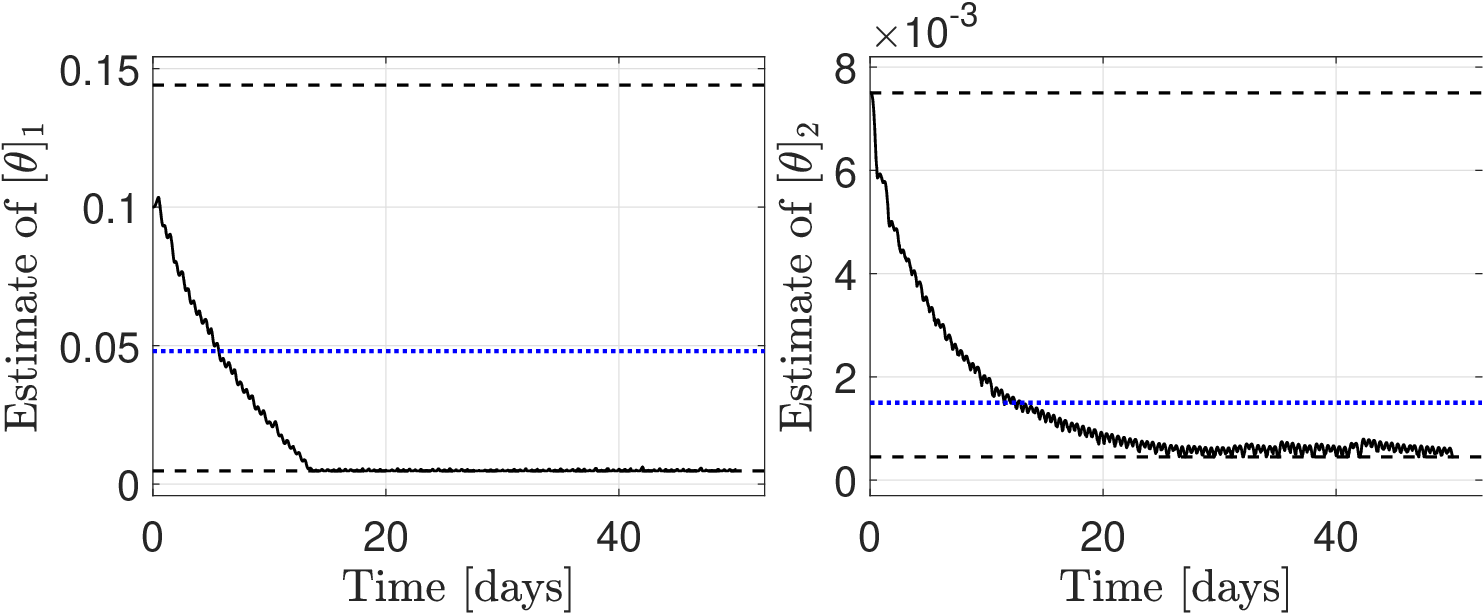}
    \caption{Evolution of the parameter estimate (black solid). Bounds of the parameter set are the dashed black lines, and the dotted blue line is the true parameter value.}
    \label{fig_LMS}
\end{figure}

\begin{figure}[!t]
    \centering 
    \includegraphics[width=0.48\textwidth]{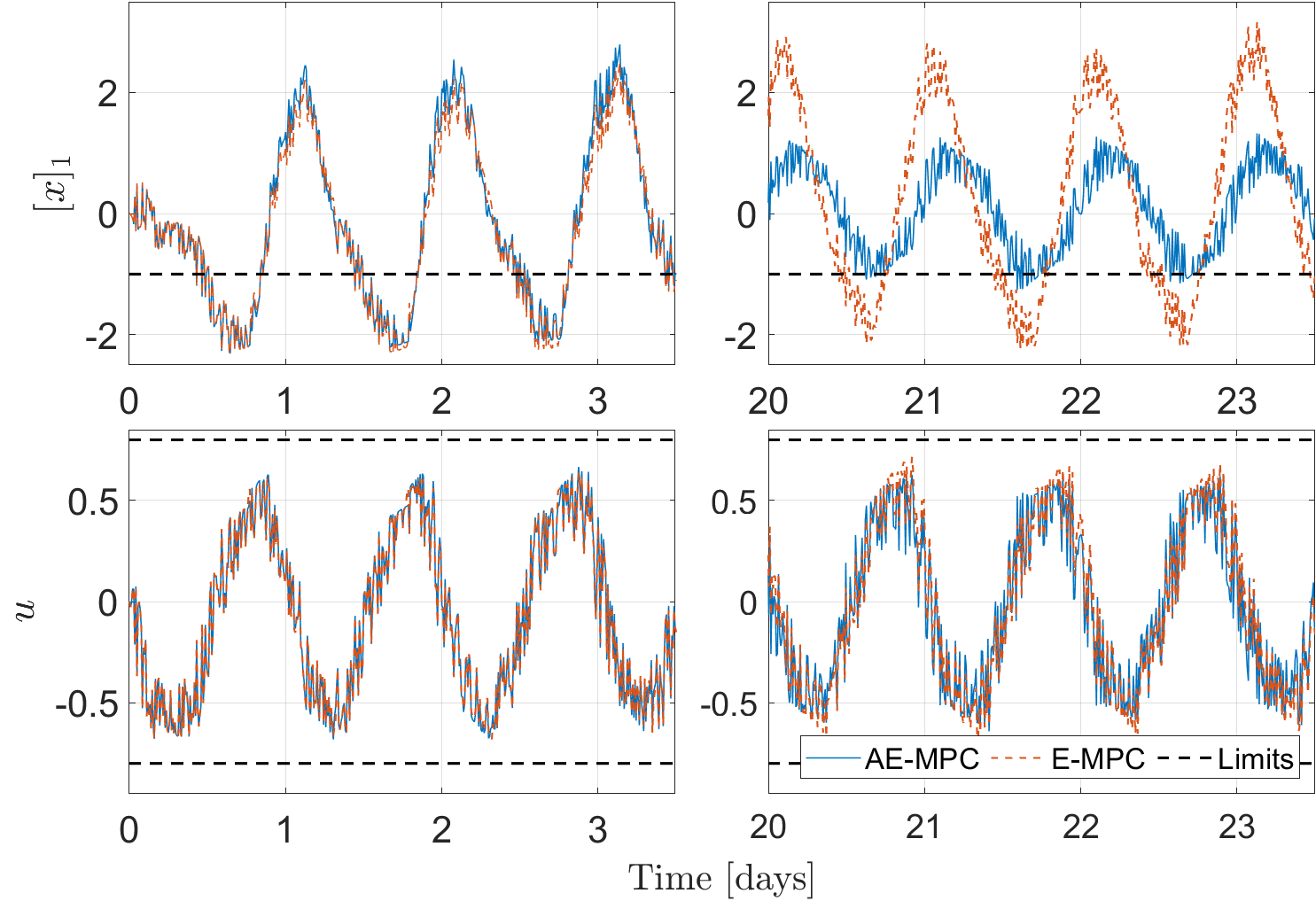}
    \caption{Evolution of the state $[x]_1$ (top) and the input $u$ (bottom) for the initial phase (left) and after parameter convergence (right). The constraint sets $\mathbb{X}, \mathbb{U}$ are shown as black, dashed lines.}
    \label{fig_state-input}
\end{figure}

\begin{figure}[!t]
    \centering
    \includegraphics[width=0.45\textwidth]{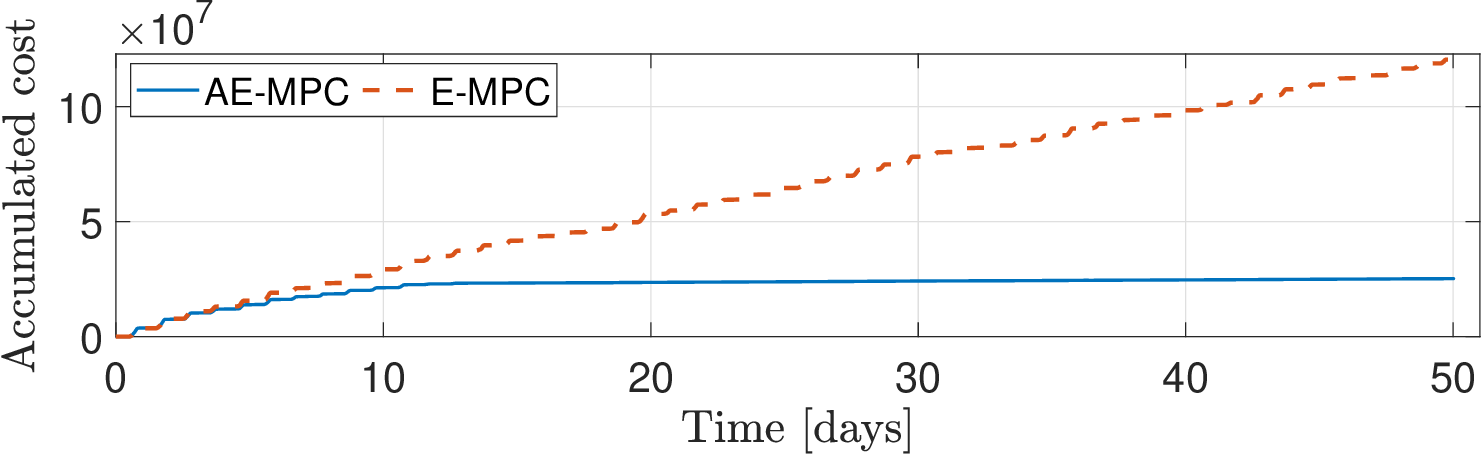}
    \caption{Accumulated cost of AE-MPC (blue) and E-MPC (orange).}
    \label{fig_perf}
\end{figure}

\section{Conclusion} \label{sec_conclusion}
A computationally and conceptually simple adaptive economic MPC scheme has been proposed for linear systems with parametric uncertainty and additive bounded disturbances. We used a projected LMS filter for the online model adaptation and derived theoretical performance guarantees for the asymptotic and transient cases. In particular, we showed for finite-energy disturbances that, asymptotically, AE-MPC has the same performance guarantee as nominal economic MPC with perfect knowledge of the parameters. 

Future work includes parameter dependent optimal steady-states using ideas from adaptive MPC~\cite{sasfi2023robust}, more general economic MPC schemes~\cite[Chap. 5]{kohler2024analysis}, and the treatment of (slowly) time-varying true parameters.

\bibliographystyle{IEEEtran}  
\bibliography{literature}  

\section*{Appendix}
\begin{lemma}[Limits of infinite series~{\cite[Lemma~1]{solopertoGuaranteedClosedLoopLearning2023}}]\label{lemma_infinite-series}
    For any sequence $(a_k)$ that satisfies $0\leq a_k\leq a_{\max} < \infty\ \forall k\in\mathbb{N}$ and any function $\alpha \in \mathcal{K}_\infty$, the following implication holds: 
    \begin{align} \label{lemma_infinite-series_equation_appendix}
        \lim_{T\rightarrow\infty}\sum_{k=0}^{T-1} a_k \leq S < \infty \Rightarrow \lim_{T\rightarrow\infty}\sum_{k=0}^{T-1} \frac{\alpha(a_k)}{T} = 0
    \end{align}
\end{lemma}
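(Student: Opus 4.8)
The plan is to reduce the claim to the classical Cesàro-mean theorem, exploiting the fact that a convergent nonnegative series forces its summands to vanish. First I would observe that the hypothesis $\lim_{T\rightarrow\infty}\sum_{k=0}^{T-1} a_k \leq S < \infty$, together with $a_k\geq 0$, means that the partial sums form a nondecreasing sequence bounded above by $S$; hence the series converges, and a necessary condition for convergence is that the general term tends to zero, so $a_k\rightarrow 0$ as $k\rightarrow\infty$. This pointwise null property is the real content extracted from the finite-energy assumption, and it is stronger than what a naive reading of the bound might suggest.

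Next I would transfer this to the transformed sequence. Since $\alpha\in\mathcal{K}_\infty$ is continuous with $\alpha(0)=0$, the convergence $a_k\rightarrow 0$ immediately yields $\alpha(a_k)\rightarrow 0$. The uniform bound $a_k\leq a_{\max}$ combined with monotonicity of $\alpha$ gives the uniform bound $\alpha(a_k)\leq \alpha(a_{\max})<\infty$, which is convenient for controlling the finite initial block of the average.

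The core step is then a standard $\varepsilon$--$N$ splitting of the Cesàro average. Fixing $\varepsilon>0$, I would use $\alpha(a_k)\rightarrow 0$ to choose $N$ with $\alpha(a_k)<\varepsilon/2$ for all $k\geq N$, and for $T>N$ split
\begin{align*}
    \frac{1}{T}\sum_{k=0}^{T-1} \alpha(a_k) = \frac{1}{T}\sum_{k=0}^{N-1} \alpha(a_k) + \frac{1}{T}\sum_{k=N}^{T-1} \alpha(a_k).
\end{align*}
The first term is at most $N\alpha(a_{\max})/T$, which vanishes as $T\rightarrow\infty$ and is therefore below $\varepsilon/2$ for $T$ large; the second term is at most $\tfrac{T-N}{T}\cdot\tfrac{\varepsilon}{2}\leq \varepsilon/2$. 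Hence the average lies below $\varepsilon$ for all sufficiently large $T$, and since each average is nonnegative, its limit is exactly zero.

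As for obstacles, there is no deep difficulty here: once the null property $a_k\rightarrow 0$ is in hand, the result is just the Cesàro lemma applied to the sequence $\alpha(a_k)$. The only subtlety I would be careful about is that $\alpha$ is applied termwise \emph{before} averaging, so one cannot pull $\alpha$ through the sum or relate it to $\alpha$ of the average; this is precisely why the argument routes through continuity of $\alpha$ at the origin rather than through any inequality on $\alpha$, and why the uniform bound $\alpha(a_{\max})$ is needed only to dispose of the fixed initial block.
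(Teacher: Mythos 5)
Your proof is correct. There is nothing in the paper itself to compare it against: the paper states this lemma with a citation to \cite[Lemma~1]{solopertoGuaranteedClosedLoopLearning2023} and provides no in-document proof. Your argument---nonnegative terms make the partial sums monotone and bounded, so the series converges, so $a_k\rightarrow 0$, so $\alpha(a_k)\rightarrow 0$ by continuity of $\alpha$ at the origin, and then the Ces\`aro average of a null sequence vanishes by the standard $\varepsilon$--$N$ split---is elementary, self-contained, and uses each hypothesis exactly where it is needed (monotonicity of $\alpha$ only to bound the finite initial block by $\alpha(a_{\max})$, and the boundedness hypothesis $a_k \leq a_{\max}$ only there as well). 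For completeness, an alternative route that bypasses extracting $a_k\rightarrow 0$ is a counting argument: for fixed $\delta>0$, at most $S/\delta$ indices can satisfy $a_k>\delta$ (otherwise the partial sums would exceed $S$), hence
\begin{align*}
\frac{1}{T}\sum_{k=0}^{T-1}\alpha(a_k) \;\leq\; \frac{S}{\delta T}\,\alpha(a_{\max}) + \alpha(\delta),
\end{align*}
and letting $T\rightarrow\infty$ followed by $\delta\rightarrow 0$ yields the same conclusion. Both arguments are equally valid for this statement; the counting version generalizes slightly more readily to settings where one only controls partial sums along subsequences, but for the lemma as stated your proof is complete.
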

\begin{lemma}[Asymptotic convergence of prediction error]\label{lemma_asymp-lin-bound}
    Let Assumption~\ref{assump_finite-energy} hold. Then, 
    \begin{align*}
         \lim_{T \rightarrow \infty} \frac{\sum_{k=0}^{T-1} \|\tilde{x}_{1|k}\| + \|w_k\|}{T} =0.
    \end{align*}    
\end{lemma}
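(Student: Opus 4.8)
The plan is to reduce the claim to the abstract summability result in Lemma~\ref{lemma_infinite-series} by treating the two summands separately. First I would establish that both $\sum_{k=0}^{T}\|\tilde{x}_{1|k}\|^2$ and $\sum_{k=0}^{T}\|w_k\|^2$ are bounded uniformly in $T$. For the disturbances this is exactly Assumption~\ref{assump_finite-energy}, which gives $\sum_{k=0}^{T}\|w_k\|^2\leq S_\mathrm{w}$. For the prediction error I would combine the LMS bound~\eqref{prop_LMS-bound_equation} from Proposition~\ref{prop_LMS-bound} with Assumption~\ref{assump_finite-energy} to conclude
\begin{align*}
\sum_{k=0}^{T}\|\tilde{x}_{1|k}\|^2\leq \frac{1}{\mu}\|\hat{\theta}_0-\theta^\ast\|^2+S_\mathrm{w}<\infty ,
\end{align*}
so that both squared sequences are summable.

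Next I would verify the pointwise boundedness hypothesis of Lemma~\ref{lemma_infinite-series} for each sequence. Since $w_k\in\mathbb{W}$ with $\mathbb{W}$ compact, $\|w_k\|^2$ is uniformly bounded. For $\tilde{x}_{1|k}=D_k(\theta^\ast-\hat{\theta}_k)$, boundedness follows from $x_k\in\mathbb{Z}$, $u_k\in\mathbb{U}$, and $\hat{\theta}_k,\theta^\ast\in\Theta$ together with compactness of $\mathbb{Z}$, $\mathbb{U}$, and $\Theta$: both $\|D_k\|$ and $\|\theta^\ast-\hat{\theta}_k\|$ are then uniformly bounded, and hence so is $\|\tilde{x}_{1|k}\|^2$.

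With these two facts in place, I would apply Lemma~\ref{lemma_infinite-series} twice using the class-$\mathcal{K}_\infty$ function $\alpha(s)=\sqrt{s}$. Taking $a_k=\|\tilde{x}_{1|k}\|^2$ yields $\alpha(a_k)=\|\tilde{x}_{1|k}\|$ and thus $\lim_{T\to\infty}\frac{1}{T}\sum_{k=0}^{T-1}\|\tilde{x}_{1|k}\|=0$; taking $a_k=\|w_k\|^2$ yields $\lim_{T\to\infty}\frac{1}{T}\sum_{k=0}^{T-1}\|w_k\|=0$. Adding the two limits gives the form needed in~\eqref{lemma_asymp-lin-bound_equation}.

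The only genuinely substantive step is the choice $\alpha(s)=\sqrt{s}$: it is precisely this square root that converts the square-summability of $\|\tilde{x}_{1|k}\|$ and $\|w_k\|$ (the form in which the LMS and finite-energy bounds are naturally available) into a vanishing Cesàro average of the un-squared norms. Everything else is bookkeeping, and I expect no real obstacle beyond confirming the uniform boundedness required to invoke Lemma~\ref{lemma_infinite-series}.
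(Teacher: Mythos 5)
Your proposal is correct and follows essentially the same route as the paper's proof: bound $\sum_k\|w_k\|^2$ by Assumption~\ref{assump_finite-energy} and $\sum_k\|\tilde{x}_{1|k}\|^2$ by Proposition~\ref{prop_LMS-bound}, then apply Lemma~\ref{lemma_infinite-series} twice with $\alpha(s)=\sqrt{s}$ and add the resulting limits. Your explicit verification of the uniform-boundedness hypothesis of Lemma~\ref{lemma_infinite-series} (via compactness of $\mathbb{W}$, $\mathbb{Z}$, $\mathbb{U}$, $\Theta$) is a small point the paper leaves implicit, but the argument is otherwise identical.
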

\begin{proof}
    We know by assumption that $\sum_{k=0}^T \|w_k\|^2 \leq S_{\mathrm{w}} < \infty$. 
    We can now use Lemma~\ref{lemma_infinite-series} 
    with $a_k = \|w_k\|^2$ and $\alpha(s) = \sqrt{s}, \alpha\in \mathcal{K}_\infty$ to get:
\begin{align}
        \lim_{T\rightarrow\infty}\sum_{k=0}^{T-1} \frac{\sqrt{\|w_k\|^2}}{T} = \lim_{T\rightarrow\infty}\sum_{k=0}^{T-1} \frac{\|w_k\|}{T}= 0  \label{lemma_bound_proof_a}.
    \end{align}
    With $\Theta$ compact, we have $\frac{1}{\mu}\|\hat{\theta}_0-\theta^\ast\|^2<\infty$, implying 
    \begin{align*}
       \sum_{k=0}^{T-1} \|\tilde{x}_{1|k}\|^2 \stackrel{\eqref{prop_LMS-bound_equation}}{\leq} S_{\mathrm{w}} + \frac{1}{\mu}\|\hat{\theta}_0-\theta^\ast\|^2 < \infty .
        \end{align*}
    Applying  Lemma~\ref{lemma_infinite-series}, 
    with $a_k = \|\tilde{x}_{1|k}\|^2$ and $\alpha\in\mathcal{K}_\infty$ as above, yields   
    \begin{align}
         \lim_{T\rightarrow\infty}\sum_{k=0}^{T-1} \frac{\sqrt{\|\tilde{x}_{1|k}\|^2}}{T}
        =\! \lim_{T\rightarrow\infty}\sum_{k=0}^{T-1} \frac{\|\tilde{x}_{1|k}\|}{T} = 0  .\label{lemma_bound_proof_b}
    \end{align}
    Adding~\eqref{lemma_bound_proof_a} and~\eqref{lemma_bound_proof_b} yields the desired inequality.
      \end{proof}

\end{document}